\documentclass[conference]{IEEEtran} 
\ifCLASSINFOpdf
\fi
\usepackage{mathtools}
\usepackage{graphicx,color}
\usepackage{cite}
\usepackage{setspace} 
\usepackage{amsmath,amsthm,amssymb,bm}
\usepackage{multirow}
\usepackage{hhline}
\usepackage{epsfig}
\usepackage{epstopdf}
\usepackage{verbatim}
\usepackage{algorithm}
\usepackage{algpseudocode}
\usepackage{cases}
\usepackage{array}
\usepackage{subfigure}
\usepackage{stfloats}
\usepackage{tabstackengine}
\setstackEOL{\cr}

\newtheorem{lemma}{Lemma}

\newtheorem{corollary}{Corollary}

\allowdisplaybreaks

\usepackage{forloop}
\newcounter{ct}

\makeatletter 
\def\@eqnnum{{\normalsize \normalcolor (\theequation)}} 
\makeatother

\makeatletter
\renewcommand*\env@matrix[1][\arraystretch]{%
	\edef\arraystretch{#1}%
	\hskip -\arraycolsep
	\let\@ifnextchar\new@ifnextchar
	\array{*\c@MaxMatrixCols c}}
\makeatother

\begin{document}
	
	\title{Optimally Deployed Multistatic OTFS-ISAC Design With Kalman-Based Tracking of Targets}
	\author{\IEEEauthorblockN{Jyotsna Rani\IEEEauthorrefmark{1}, Kuntal Deka\IEEEauthorrefmark{1}, Ganesh Prasad\IEEEauthorrefmark{2}, and Zilong Liu\IEEEauthorrefmark{3}}\\
		\IEEEauthorblockA{\IEEEauthorrefmark{1}Department of Electronics \& Electrical Engineering, Indian Institute of Technology Guwahati, India\\\IEEEauthorrefmark{2}Department of Electronics and Communication Engineering, National Institute of Technology Silchar, India\\\IEEEauthorrefmark{3}School of Computer Science and Electronics Engineering, University of Essex, Colchester, United Kingdom\\
			emails: \{r.jyotsna, kuntaldeka\}@iitg.ac.in,  gpkeshri@ece.nits.ac.in, zilong.liu@essex.ac.uk
	}}
	
	\maketitle
	\begin{abstract}
		This paper studies orthogonal time-frequency space (OTFS) modulation aided multistatic integrated sensing and communication (ISAC) in vehicular networks, whereby its delay-Doppler robustness is exploited for enhanced communication and high-resolution sensing. We present a triangulation-based deployment framework combined with Kalman filtering (KF) that enables accurate target localization and velocity estimation. In addition, we assess the ISAC performance in the multistatic topology to determine its effectiveness in the dynamic environment. Further, a suboptimal placement strategy for the multistatic receivers is devised to reduce the targets' localization error. Numerical results demonstrate significant reductions in the sensing error and bit error rate (BER) performances.
	\end{abstract}   \vspace{0mm}
	
	\begin{IEEEkeywords}
		Multistatic integrated sensing and communication, orthogonal time-frequency space, Kalman filtering
	\end{IEEEkeywords}
	
	\vspace{-2mm}
	\section{Introduction}
	Vehicular networks are critical for future intelligent transportation systems, enabling safer, greener, and more efficient driving through enhanced situational awareness \cite{noor22}. Integrated sensing and communication (ISAC) combines shared waveforms, signal processing, and hardware to improve both communication and sensing \cite{FLiu02}. Traditional orthogonal frequency-division multiplexing (OFDM) struggles in high-mobility scenarios due to Doppler-induced interference, whereas orthogonal time-frequency space (OTFS) modulation offers improved communication in dynamic environments \cite{RHad03}. This paper explores OTFS-aided ISAC for vehicular networks, focusing on multistatic ISAC with coordinated, low-complexity nodes to enhance both sensing and communication.

	\subsubsection*{Related Works}
	Monostatic ISAC systems, which feature co-located transmitters and receivers, provide low hardware complexity and efficient spectrum utilization. OFDM-based multipe-input and multiple-output (MIMO) ISAC systems enable reliable estimation of range, angle, and velocity~\cite{su05,che07}. However, a single observation point limits spatial diversity and system robustness, particularly in high-mobility vehicular scenarios. Multistatic ISAC offers a solution by allowing cooperation among multiple nodes, thereby improving both localization accuracy and sensing reliability. Previous studies had shown that joint waveform and resource allocation lead to a lower Cramér–Rao lower bound (CRLB) while maintaining communication performance~\cite{li09}. Cooperative vehicular ISAC frameworks had also been shown to reduce the detection errors and latency. At the network level, strategies such as coordinated multi-point transmission and multi-static radar had been studied to enhance sensing performance as node cooperation increases~\cite{hon11}. Furthermore, mutual-information-based approaches have been used to model multistatic ISAC as a virtual MIMO system~\cite{yin13}. Extending previous developments, authors in \cite{yua21} presented a KF–based tracking approach in multi-static ISAC that integrates bistatic range, range rate, and direction of arrival (DoA) measurements to improve estimation accuracy, noise robustness, and real-time tracking capability. 
    Besides, an OTFS-based vehicle-to-infrastructure multistatic ISAC framework was investigated in \cite{sruti18} that exploits spatial diversity and employs a sparse recovery–driven joint association and localization method to deliver accurate and efficient target detection and parameter estimation in dynamic environments. \textit{Despite recent advances in multistatic ISAC and OTFS-based frameworks, existing studies largely overlook the joint exploitation of OTFS waveforms and geometrical topology for spatially diverse localization and velocity estimation. Moreover, the impact of receiver topology optimization on OTFS-based ISAC and KF-assisted targets' tracking remains insufficiently explored, leaving room for performance enhancement through coordinated geometry-aware design.}
	
	\subsubsection*{Contributions}
	The key contributions in this work is three-fold. (i) We propose a multistatic ISAC system with OTFS waveforms and geometrical triangulation that leverages spatial diversity to significantly improve target localization and velocity estimation. (ii) We show that orthogonal receiver placement maximizes triangulation area, yielding a suboptimal topology that reduces estimation error. (iii) For reliable tracking and active sensing performance, we develop KF-based signal processing algorithms. Furthermore, we investigate a performance analysis of the OTFS-based multistatic ISAC system within the given geometrical topology.  Extensive numerical evaluations demonstrate substantial RMSE reduction and improved bit error rate (BER) performance.
	
	\subsubsection*{Notations}
	Vectors and matrices are denoted by bold lowercase and uppercase letters, respectively. $\mathbb{C}^{M\times N}$ is the set of $M\times N$ complex matrices. $\mathbb{E}\{\cdot\}$ denotes expectation, $\otimes$ the Kronecker product, $\text{vec}(\cdot)$ column-wise vectorization with $\text{vec}^{-1}(\cdot)$ its inverse, $\lVert \cdot \rVert$ the Euclidean norm, $\text{diag}\{\cdot\}$ a diagonal matrix, and $\mathcal{O}(\cdot)$ asymptotic complexity.
    \vspace{-1.5mm}
	
	\section{System Architecture}\label{sec:sys_mod}
	\begin{figure}[!t]
		\centering
		\includegraphics[width=2.5in]{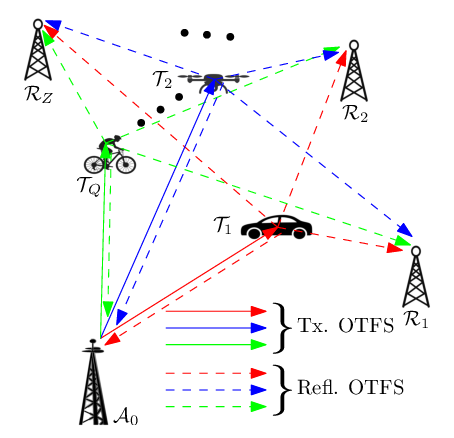}
		\caption{Network topology for multistatic ISAC.}
		\label{fig:ISAC_OTFS_Arch}\vspace{-5mm}
	\end{figure}
	\subsubsection*{System Model}
In Fig.~\ref{fig:ISAC_OTFS_Arch}, the system employs a multistatic ISAC setup with one anchor node (AN) $\mathcal{A}_0$, and several receiver nodes $\{\mathcal{R}_1,\ldots,\mathcal{R}_Z\}$, placed at different locations. The system operates in a multistatic manner by utilizing both monostatic and bistatic configurations among the nodes. In monostatic sensing, $\mathcal{A}_0$ detects targets $\{\mathcal{T}_1,\ldots,\mathcal{T}_Q\}$, by processing echoes from its own transmissions. In bistatic sensing, $\mathcal{A}_0$ transmits while the receivers $\mathcal{R}_z$ ($z \in \{1, \ldots, Z\}$), capture the target reflections. These monostatic and bistatic configurations work together in the multistatic setup, where the cooperation among the nodes enhances the system's sensing capabilities. The sensing data are shared among the nodes through a dedicated wireless channel, enabling cooperative operation for the multistatic sensing. Under active sensing, the anchor sends known probing signals that can be exploited by the receivers for bistatic sensing. Conversely, in passive ISAC operation, the receivers only know the pilot signals for estimation, without knowledge of the transmitted data symbols. To cope with the doubly selective channels, the transmitted signal is modulated using OTFS as described below.
	
	\subsubsection*{OTFS Signalling at the Transmitter}
	At the transmitter, a QAM data vector $\mathbf{d}_I \in \mathbb{A}^{MN}$ is reshaped into a delay-Doppler matrix $\mathbf{D}_I \in \mathbb{A}^{M \times N}$ by splitting it into $N$ segments of length $M$, each forming a column, where $\mathbb{A}$ denotes the QAM alphabet. The parameters $M$ and $N$ denote the numbers of delay and Doppler bins respectively, indexed by $l \in \{0,\dots,M-1\}$ and $k \in \{0,\dots,N-1\}$. The OTFS signal occupies bandwidth of $M\Delta f$ and duration of $NT$ ($T\Delta f=1$), giving delay and Doppler resolutions of $1/M\Delta f$ and $1/NT$, with $(l,k)$ corresponding to the delay of $l/M\Delta f$ and the Doppler shift of $k/NT$. For ISAC, a pilot matrix $\mathbf{D}_P \in \mathbb{C}^{M \times N}$ is superimposed to the data matrix. Such a pilot matrix has a single nonzero entry at $(l_p,k_p)$, given by  
	$\mathbf{D}_P(l_p,k_p)=\sqrt{MN\sigma_P^2}$, 
	ensuring average pilot power of $\sigma_P^2$. Therefore, the combined transmit matrix is $\mathbf{D}=\mathbf{D}_I+\mathbf{D}_P$. Then, the delay-Doppler symbols are mapped to the time-frequency domain via the inverse symplectic fast Fourier transform (ISFFT) $\mathbf{X}=\mathbf{F}_M \mathbf{D} \mathbf{F}_N^H$, where $\mathbf{F}_M$ and $\mathbf{F}_N$ are fast Fourier transform (FFT) square matrices of size $M$ and $N$, respectively. After the inverse FFT and pulse shaping with $\mathbf{P}_\text{tx}=\text{diag}\{p_\text{tx}(0),\dots,p_\text{tx}(M-1)\}$, 
	the transmit matrix is  
	\begin{align}\label{eq:tx_mat}
		\mathbf{S}=\mathbf{P}_\text{tx}\,\mathbf{F}_M^H\mathbf{X}.
	\end{align}
    Using \eqref{eq:tx_mat}, the time-domain transmit vector $\mathbf{s}$ in \eqref{eq:tx_vec} is derived by substituting $\mathbf{X}=\mathbf{F}_M\mathbf{D}\mathbf{F}_N^H$ into the expression for $\mathbf{S}$ and subsequently applying the vectorization property $\mathrm{vec}(ABC)=(C^T\!\otimes\!A)\mathrm{vec}(B)$.
	\begin{align}\label{eq:tx_vec}
		\mathbf{s} = (\mathbf{F}_N^H \otimes \mathbf{P}_\text{tx}) \mathbf{d}, \quad \mathbf{d}=\mathrm{vec}(\mathbf{D}).
	\end{align}
	
	\subsubsection*{Received Signal Processing}
	In a doubly-selective channel, the received signal at the $j$-th receiver is the superposition of $P$ propagation paths, each introducing a complex gain, cyclic delay, and Doppler shift. The channel matrix is given as 
	\begin{align}
		\mathbf{H}_j = \sum_{q=1}^P h_{q,j}\,\mathbf{\Pi}^{l_{q,j}}\mathbf{\Delta}^{k_{q,j}},
	\end{align}
	where $h_{q,j}$ is the path gain, $l_{q,j}$ and $k_{q,j}$ denote the delay and Doppler indices, $\mathbf{\Pi}$ is the cyclic shift matrix (cf. \cite[eq. (9)]{rav01}), and $\mathbf{\Delta}=\mathrm{diag}\{c^0,\dots,c^{MN-1}\}$ with $c=\mathrm{e}^{i2\pi/MN}$. Thus, the received signal is   
	\begin{align}
		\mathbf{r}_j = \mathbf{H}_j\mathbf{s} + \mathbf{n}_j, \quad \mathbf{n}_j\sim\mathcal{CN}(\mathbf{0},\sigma_j^2\mathbf{I}_{MN}).
	\end{align}
	Transforming $\mathbf{r}_j$ into the delay-Doppler domain, we obtain  
	\begin{align}
		\mathbf{y}_j = (\mathbf{F}_N \otimes \mathbf{P}_{rx})\mathbf{r}_j 
		= \overline{\mathbf{H}}_j\mathbf{d} + \overline{\mathbf{n}}_j,
	\end{align}
	with  
	$\overline{\mathbf{H}}_j = \sum_{q=1}^P h_{q,j}\,\mathcal{T}(l_{q,j},k_{q,j})$ and 
	$\mathcal{T}(l,k) = (\mathbf{F}_N \otimes \mathbf{P}_{rx}) \mathbf{\Pi}^l\mathbf{\Delta}^k (\mathbf{F}_N^H \otimes \mathbf{P}_{tx})$. Here, $\mathcal{T}(l,k)$ represents the joint delay–Doppler transformation operator that imposes a cyclic time shift by $l$ and a frequency modulation by $k$ on the transmitted signal. For rectangular pulses, $\mathbf{P}_{tx}=\mathbf{P}_{rx}=\mathbf{I}_M$. The equivalent noise is $\overline{\mathbf{n}}_j\sim\mathcal{CN}(\mathbf{0},\sigma_j^2\mathbf{I}_{MN})$.  

\vspace{-1mm}
\subsubsection*{Active Sensing Procedure}
In active sensing, the transmit vector $\mathbf{d}$ is known to all receivers. The target range and velocity at the $j$th receiver are determined by estimating the complex gains $\mathbf{h}_j$, delays $\bm{\tau}_j$, and Doppler shifts $\bm{\nu}_j$ using maximum likelihood estimation, which simplifies to a least-squares problem under Gaussian noise:
	\begin{align}
		\!\!\!\!  (\widehat{\mathbf{h}}_j,\widehat{\bm{\tau}}_j,\widehat{\bm{\nu}}_j) 
		= \arg\min_{(\mathbf{h}_j,\bm{\tau}_j,\bm{\nu}_j)} 
		\Big\lVert \mathbf{y}_{j}\hspace{-1mm}-\hspace{-1mm}\sum_{q=1}^P h_{q,j}\mathcal{T}(l_{q,j},k_{q,j})\mathbf{d} \Big\rVert^2\hspace{-1.5mm}.\!\!\!\!  
	\end{align}
To address the non-convex and high-dimensional search space, a \textit{sequential approach} is used. Initially, the dominant path is identified by maximizing the correlation between the received signal and a delayed-Doppler shifted version of $\mathbf{d}$. Subsequent paths are estimated iteratively by applying \textit{interference mitigation}: previously estimated path contributions are subtracted before estimating the next path's parameters. The estimation of the $i$th path is given by:
	\begin{align}\label{eq:tau_n_nu}\nonumber
		(\widehat{\tau}_{i,j},\widehat{\nu}_{i,j}) &= \arg\max_{(\tau_{i,j},\nu_{i,j})}\left| (\mathcal{T}(\tau_{i,j},\nu_{i,j})\mathbf{d})^H\left(\mathbf{y}_j \right.\right.\\
        &\;\;\left.\left.-\sum_{q=1}^{i-1}\widehat{h}_{q,j}\mathcal{T}(\widehat{\tau}_{q,j},\widehat{\nu}_{q,j})\mathbf{d}\right) \right|^2.
	\end{align}
The corresponding gain for the $i$th path is then estimated as:
	\begin{align}\label{eq:gain_est}
		\widehat{h}_{i,j} = \frac{(\mathcal{T}(\widehat{\tau}_{i,j},\widehat{\nu}_{i,j})\mathbf{d})^H\mathbf{y}_j}{\mathbf{d}^H\mathcal{T}(\widehat{\tau}_{i,j},\widehat{\nu}_{i,j})^H\mathcal{T}(\widehat{\tau}_{i,j},\widehat{\nu}_{i,j})\mathbf{d}}.
	\end{align}  
This \textit{iterative path estimation with interference mitigation} ensures accurate estimation of each reflected path's parameters by suppressing interference from previously detected signals.  
	\subsubsection*{Computational Complexity}
	The cost is dominated by the two-dimensional delay--Doppler search, which requires $\mathcal{O}(G_\tau G_\nu MN)$ operations per path at one receiver, where $G_\tau$ and $G_\nu$ denote the grid sizes in delay and Doppler, respectively. For the $P$ paths and $Z+1$ receivers, the overall complexity is $\mathcal{O}((Z+1)P G_\tau G_\nu MN)$, with the correlation search as the primary bottleneck.  

\vspace{-0.5mm}
    \subsubsection*{Computation of Range and Velocity}
	Finally, the self-reflected signal is received by receiver $\mathcal{A}_0$, whereby the range $\rho_{i,0}$ and the radial velocity $v_{i,0}$ of the $i$th target can be computed as in \eqref{eq:est_self} below. For receiver $\mathcal{R}_j$, the range $\rho_{i,j}$ and radial velocity $v_{i,j}$ of the $i$th target are given in \eqref{eq:est_ref}.
	\begin{subequations}\label{eq:est_tar_para}
		\begin{align}\label{eq:est_self}
			&\widehat{\rho}_{i,0} = \widehat{\tau}_{i,0}c/2;\;\; \widehat{v}_{i,0} = \widehat{\nu}_{i,0}c/2f_c,\\\label{eq:est_ref}
			& \widehat{\rho}_{i,j} = \widehat{\tau}_{i,j}c-\widehat{\rho}_{i,0}; \;\; \widehat{v}_{i,j} = (\widehat{\nu}_{i,j} - \widehat{\nu}_{i,0})c/f_c.
		\end{align}
	\end{subequations}

    \vspace{-0.5mm}
	\subsubsection*{ISAC Procedure}
	In this scenario, only the pilot signal is known at the receivers. The channel, defined by its gains, delays, and Doppler shifts, is first estimated. After the initial channel estimation, the information data symbols are detected, and the channel is iteratively refined with both the detected data and the pilot. For a coarse estimate, the pilot delay-Doppler matrix $\mathbf{D}_P$ is vectorized as $\mathbf{d}_P = \text{vec}(\mathbf{D}_P)$. Each receiver treats the transmitted data $\mathbf{d}_I$ as interference and estimates the channel via
	\begin{align}\label{eq:est_pilot_min}(\widehat{\mathbf{h}}_j^{(0)},\widehat{\bm{\tau}}_j^{(0)},\widehat{\bm{\nu}}_j^{(0)})\hspace{-0.5mm} = \hspace{-0.5mm}\arg\hspace{-0.5mm}\min_{(\mathbf{h}_j,\bm{\tau}_j,\bm{\nu}_j)}\hspace{-0.5mm} \left\lVert \mathbf{y}_{j}\hspace{-0.5mm}-\hspace{-0.5mm}\sum_{q=1}^P\hspace{-0.5mm} h_{q,j}\hspace{-0.5mm}\mathcal{T}(\tau_{q,j},\hspace{-0.5mm}\nu_{q,j})\mathbf{d}_P \right\rVert^2.
	\end{align}
	Sequential estimation as described in \eqref{eq:tau_n_nu} and \eqref{eq:gain_est} can be applied, though accuracy is lower than in active sensing where $\mathbf{d}_I$ is known. To refine the channel, the information data is recovered via a regularized least-squares problem
	\begin{align}\label{eq:reg_data_est}
		\min_{\mathbf{d}} \lVert \mathbf{y}_j - \overline{\mathbf{H}}_j\mathbf{d} \rVert^2 + \mu \lVert \mathbf{d} \rVert^2,
	\end{align}  
	with a linear MMSE solution below 
	\begin{align}\label{eq:closed_sol}
		\widehat{\mathbf{d}}_{\text{MMSE}} = (\overline{\mathbf{H}}_j^H\overline{\mathbf{H}}_j + \mu\mathbf{I})^{-1}\overline{\mathbf{H}}_j^H\mathbf{y}_j.
	\end{align}  
	For large channel matrices, gradient descent can be used as
	\begin{subequations}\label{eq:gradient}
		\begin{align}
			\nabla J(\mathbf{d})^{(t)} &= 2\overline{\mathbf{H}}_j^H(\overline{\mathbf{H}}_j\mathbf{d}^{(t)} - \mathbf{y}_j) + 2\mu\mathbf{d}^{(t)},\\
			\mathbf{d}^{(t+1)} &= \mathbf{d}^{(t)} - \eta\nabla J(\mathbf{d})^{(t)},
		\end{align}
	\end{subequations}  
	until $\lVert \mathbf{d}^{(t+1)} - \mathbf{d}^{(t)} \rVert < \epsilon$. The estimated vector $\mathbf{d}'$ is then transformed back to the delay-Doppler domain: $\mathbf{D}' = \text{vec}^{-1}(\mathbf{d}')$, and the information component is extracted by $\mathbf{D}_I' = \mathbf{D}' - \mathbf{D}_P$. Symbol-wise demodulation is then performed as
	\begin{align}\label{eq:symb_demod}
		\widehat{\mathbf{D}}_I = \arg\min_{\mathbf{D}_I \in \mathbb{A}^{M \times N}} \lVert \mathbf{D}_I - \mathbf{D}_I' \rVert^2.
	\end{align}  
	Using both $\mathbf{D}_P$ and $\widehat{\mathbf{D}}_I$, the channel is refined via the active sensing and data are re-estimated. This outer loop alternates between data detection and channel refinement until convergence. The final channel estimates at all $Z+1$ receivers yield target ranges and radial velocities via \eqref{eq:est_tar_para}, enabling localization and velocity estimation.

    \vspace{-2mm}
	\section{Targets Localization Via Cooperative Sensing}\label{sec:loc_n_sen}
	\begin{figure}[!t]
		\centering  \includegraphics[width=2.0in]{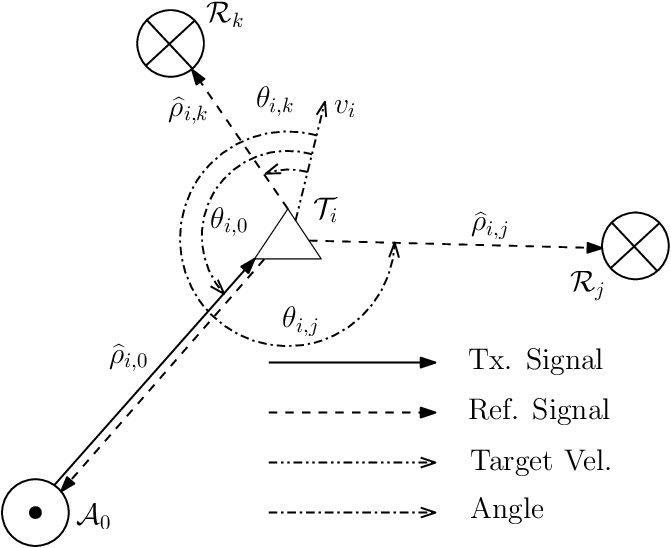}
		\caption{Estimation of location and velocity of a target via cooperative sensing.}    
		\label{fig:dist_vel_est}\vspace{-2mm}
	\end{figure}
	
	This section presents a method to estimate a target’s position and speed, refining accuracy using estimated ranges and radial velocities relative to AN $\mathcal{A}_0$ and multiple receivers. In the analysis, we assume that all nodes are approximately coplanar, thus ignoring elevation differences among them, which is a practical simplification in most surface-level sensing scenarios.
	
	\subsubsection*{Targets' Location}
	As shown in Fig.~\ref{fig:dist_vel_est}, the position of target $\mathcal{T}_i$ is estimated via geometric triangulation using the transmitter $\mathcal{A}_0$ and two distinct receivers $\mathcal{R}_j$ and $\mathcal{R}_k$ ($j, k \in \{1, \dots, Z\}, j \ne k$). The ranges $\widehat{\rho}_{i,0}$, $\widehat{\rho}_{i,j}$, and $\widehat{\rho}_{i,k}$ are computed according to \eqref{eq:est_tar_para}. With the nodes' known positions $(x_0, y_0)$, $(x_j, y_j)$, and $(x_k, y_k)$, the target's coordinates $(\alpha_i, \beta_i)$ follow from the Euclidean distance equations in \eqref{eq:euclidean_dist}. Subtracting the equations for $q=j$ and $q=k$ from that for $q=0$ gives the simplified linear system in \eqref{eq:simplified}. Further, these are represented in matrix form in \eqref{eq:matrix_form}, as shown below
	\begin{subequations}
		\begin{align}\label{eq:euclidean_dist}
			&(x_q - \alpha_i)^2 + (y_q - \beta_i)^2 = \rho_{i,q}^2;\;\;\;\;\;\; q\in\{0,j,k\}\\\nonumber
			&\begin{bmatrix}
				x_q - x_0 & y_q - y_0
			\end{bmatrix}
			\begin{bmatrix}
				\alpha_i\\ 
				\beta_i
			\end{bmatrix}
			=\frac{1}{2}(\rho_{i,0}^2 - \rho_{i,q}^2 \\\label{eq:simplified}
			&\hspace{0.7in} - (x_0^2 - x_q^2) - (y_0^2 - y_q^2)); \;\;\;\; q \in \{j,k\}\\
			& \mathbf{A}
			\begin{bmatrix}
				\alpha_i\\ 
				\beta_i
			\end{bmatrix}
			= \mathbf{B};\;\;\;\; \mathbf{A} = 
			\begin{bmatrix}
				x_j - x_0 & y_j - y_0 \\\nonumber
				x_k - x_0 & y_j - y_0
			\end{bmatrix}\\\label{eq:matrix_form}
			& \mathbf{B} = \frac{1}{2}
			\begin{bmatrix}
				\rho_{i,0}^2 - \rho_{i,j}^2 - (x_0^2 - x_j^2) - (y_0^2 - y_j^2)) \\
				\rho_{i,0}^2 - \rho_{i,k}^2 - (x_0^2 - x_k^2) - (y_0^2 - y_k^2))
			\end{bmatrix}\\\label{eq:solution}
			& \begin{bmatrix}
				\alpha_i\\ 
				\beta_i
			\end{bmatrix}
			= \mathbf{A}^{-1}\mathbf{B}.
		\end{align}
	\end{subequations}
	The solution in \eqref{eq:solution} uniquely determines the $i$th target’s location if $\mathbf{A}$ is full rank, which occurs when the transmitter and two receivers are non-collinear.
	
	\subsubsection*{Targets' Velocity}
	Using geometric triangulation in Fig.~\ref{fig:dist_vel_est}, we estimate the target's velocity. The unit vector from the $i$th target to node $q \in \{j, k\}$ is $\mathbf{u}_{q,i} = \frac{[x_q - \alpha_i,\, y_q - \beta_i]^T}{\rho_{i,q}}$, where $(x_q, y_q)$ are the node coordinates and $\rho_{i,q}$ is the distance to the target. If the target's velocity is $\mathbf{v}_i = [v_{i,x}, v_{i,y}]^T$, its projection onto $\mathbf{u}_{q,i}$ gives the radial velocity measured by node $q$: $v_{i,q} = \mathbf{v}_i^T \mathbf{u}_{q,i}, \quad q \in \{j, k\}$. By arranging these equations in matrix form, we obtain a linear system as
	\begin{align}\nonumber\label{eq:vel_cal}
		&\mathbf{C}
		\begin{bmatrix}
			v_{i,x} \\
			v_{i,y}
		\end{bmatrix}
		= \mathbf{D}
		\quad \Rightarrow \quad
		\begin{bmatrix}
			v_{i,x} \\
			v_{i,y}
		\end{bmatrix}
		= \mathbf{C}^{-1}\mathbf{D}; \\
		&\mathbf{C} = 
		\begin{bmatrix}
			(x_j - \alpha_i)/\rho_{i,j} & (y_j - \beta_i)/\rho_{i,j} \\
			(x_k - \alpha_i)/\rho_{i,k} & (y_k - \beta_i)/\rho_{i,k}
		\end{bmatrix},\;\;
		\mathbf{D} = 
		\begin{bmatrix}
			v_{i,j} \\
			v_{i,k}
		\end{bmatrix}.
	\end{align}
	Assuming $\mathbf{C}$ is invertible, the system gives the velocity vector $\mathbf{v}_i$. With $Z$ receivers, $\mathcal{Z} \triangleq \binom{Z}{2}$ unique pairs are combined with $\mathcal{A}_0$ to form triangles, each providing an independent estimate of the target's location and velocity via \eqref{eq:solution} and \eqref{eq:vel_cal}.
	
	\subsubsection*{Nearest-Neighbor-Based Selection Approach}
	Averaging all $\mathcal{Z}$ triangle-based estimates reduces random noise but ignores the reliability of each configuration. Some triangles can produce inaccurate estimates, for example, when nodes are nearly collinear or too close. To address this, we describe a \textit{nearest-neighbor selection method} that favors consistent estimates. Let the set of estimates for the $i$th target be $\mathcal{C} = \{C_1, \dots, C_{\mathcal{Z}}\}$, with each $C_q \in \mathbb{R}^2$ representing $[\alpha_{i,q}, \beta_{i,q}]^T$ or $[v_{i,x,q}, v_{i,y,q}]^T$. The most coherent subset is identified by computing pairwise Euclidean distances within $\mathcal{C}$. For a predefined threshold $\xi > 0$, the neighbor set $\mathcal{S}_n$ of an estimate $C_n$ includes all other estimates that lie within distance $\xi$ of it:
	\begin{align}\label{eq:neigh_set}
		\mathcal{S}_n = \left\{ C_m \in \mathcal{C} \setminus \{C_n\} \;\middle|\; \lVert C_n - C_m \rVert \le \xi \right\}.
	\end{align}
	Among all such sets, we identify the one with the largest number of neighbors:
	\begin{align}
		n^* = \arg \max_n |\mathcal{S}_n|.
	\end{align}
	where $|\cdot|$ denotes the number of elements in a set. The final estimate is then computed by averaging $C_{n^*}$ with all points in its neighbor set:
	\begin{align}\label{eq:nearest_sel}
		C_{\text{avg}} = \frac{1}{|\mathcal{S}_{n^*}| + 1} \left( C_{n^*} + \sum_{C_m \in \mathcal{S}_{n^*}} C_m \right).
	\end{align}
	By averaging only estimates that are spatially consistent, this method minimizes the impact of large errors. We then examine how the positions of AN $\mathcal{A}_0$ and receivers influence target location estimation using geometric triangulation.

	\subsubsection*{Suboptimal Locations of the Receivers}
	Using Fig.~\ref{fig:dist_vel_est}, we find the effect of the location of the AN and receivers on the estimation of the targets' location using the following lemma.
	\begin{lemma}\label{lemma1}
		For the given variances, $\{\sigma_{i,q}^2 \mid q \in \{0, j, k\}\}$ in the estimation of the square of the ranges of the $i$th target from the nodes, $\mathcal{A}_0$, $\mathcal{R}_j$, and $\mathcal{R}_k$, one can minimize $\operatorname{tr}(\operatorname{cov}(\widehat{\alpha}_{i},\widehat{\beta}_{i}))$ or $\kappa_{\max}(\operatorname{cov}(\widehat{\alpha}_{i},\widehat{\beta}_{i}))$ of the estimated location $(\widehat{\alpha}_{i},\widehat{\beta}_{i})$ using geometrical triangulation (cf. Fig.~\ref{fig:dist_vel_est}) by optimizing the magnitude of the area of the triangle formed using the three nodes. 
	\end{lemma}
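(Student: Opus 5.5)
We work from the linear triangulation model \eqref{eq:matrix_form}--\eqref{eq:solution}. The estimate is $[\widehat{\alpha}_i,\widehat{\beta}_i]^T=\mathbf{A}^{-1}\widehat{\mathbf{B}}$. Here $\mathbf{A}$ depends only on the known node positions, and $\widehat{\mathbf{B}}$ depends linearly on the estimated squared ranges $\widehat{\rho}_{i,q}^2$, $q\in\{0,j,k\}$. We model $\widehat{\rho}_{i,q}^2=\rho_{i,q}^2+e_q$ with independent zero-mean errors of variance $\sigma_{i,q}^2$. Then $\widehat{\mathbf{B}}-\mathbf{B}=\tfrac12[e_0-e_j,\;e_0-e_k]^T$. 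Its covariance is
\begin{align*}
\mathbf{\Sigma}_B=\tfrac14\begin{bmatrix}\sigma_{i,0}^2+\sigma_{i,j}^2 & \sigma_{i,0}^2\\ \sigma_{i,0}^2 & \sigma_{i,0}^2+\sigma_{i,k}^2\end{bmatrix},
\end{align*}
which is fixed by the given variances and independent of the geometry. Hence $\operatorname{cov}(\widehat{\alpha}_i,\widehat{\beta}_i)=\mathbf{A}^{-1}\mathbf{\Sigma}_B\mathbf{A}^{-T}$, and all geometric dependence sits in $\mathbf{A}^{-1}$. We read $\mathbf{A}$ as having rows $(x_j-x_0,\,y_j-y_0)$ and $(x_k-x_0,\,y_k-y_0)$, as in \eqref{eq:simplified}; the second-row entry $y_j-y_0$ in \eqref{eq:matrix_form} is a typo.

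The key identity is $|\det\mathbf{A}|=2\mathcal{S}$, where $\mathcal{S}$ is the area of the triangle $\mathcal{A}_0\mathcal{R}_j\mathcal{R}_k$. Write $\mathbf{A}^{-1}=\operatorname{adj}(\mathbf{A})/\det\mathbf{A}$. Then
\begin{align*}
\operatorname{cov}(\widehat{\alpha}_i,\widehat{\beta}_i)=\frac{1}{4\mathcal{S}^2}\,\operatorname{adj}(\mathbf{A})\,\mathbf{\Sigma}_B\,\operatorname{adj}(\mathbf{A})^T .
\end{align*}
Taking the trace gives $\operatorname{tr}(\operatorname{cov})=\operatorname{tr}(\mathbf{\Sigma}_B\operatorname{adj}(\mathbf{A})^T\operatorname{adj}(\mathbf{A}))/(4\mathcal{S}^2)$. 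The adjugate has the same entries as $\mathbf{A}$ up to sign and permutation. So its Frobenius norm is $\|\operatorname{adj}\mathbf{A}\|_F^2=d_j^2+d_k^2$, where $d_q$ is the distance from $\mathcal{A}_0$ to $\mathcal{R}_q$. We parametrize the receivers by $d_j,d_k$ and the angle $\theta$ at $\mathcal{A}_0$, so that $\mathcal{S}=\tfrac12 d_jd_k|\sin\theta|$.

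Bounding $\mathbf{\Sigma}_B$ between $\lambda_{\min}(\mathbf{\Sigma}_B)\mathbf{I}$ and $\lambda_{\max}(\mathbf{\Sigma}_B)\mathbf{I}$ then gives
\begin{align*}
\lambda_{\min}(\mathbf{\Sigma}_B)\frac{d_j^2+d_k^2}{4\mathcal{S}^2}\le\operatorname{tr}(\operatorname{cov})\le\lambda_{\max}(\mathbf{\Sigma}_B)\frac{d_j^2+d_k^2}{4\mathcal{S}^2}.
\end{align*}
For $\kappa_{\max}$ (the largest eigenvalue) we use $\kappa_{\max}(\operatorname{cov})\le\lambda_{\max}(\mathbf{\Sigma}_B)\,\|\mathbf{A}^{-1}\|_2^2=\lambda_{\max}(\mathbf{\Sigma}_B)/\sigma_{\min}(\mathbf{A})^2$. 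We also use $\sigma_{\min}(\mathbf{A})\sigma_{\max}(\mathbf{A})=2\mathcal{S}$ with $\sigma_{\max}(\mathbf{A})^2\le d_j^2+d_k^2$. This yields the same bound in terms of $(d_j^2+d_k^2)/(4\mathcal{S}^2)$.

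For fixed baselines $d_j,d_k$, both the upper bound and the lower bound are governed by $\mathcal{S}^{-2}$ alone. Both are therefore minimized by maximizing $|\sin\theta|$, i.e.\ by maximizing the triangle area. The maximum is attained at $\theta=\pi/2$, the orthogonal placement claimed in the contributions. We also check the objectives directly at the optimum. For $\operatorname{tr}(\operatorname{cov})$, with equal variances and $d_j=d_k$, the orthogonal geometry minimizes the exact trace, not only the bounds. In general it minimizes the bound, which is why the paper calls the placement ``suboptimal''.

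The main obstacle is that the exact trace still depends on $\theta$ through the off-diagonal terms of $\operatorname{adj}(\mathbf{A})^T\operatorname{adj}(\mathbf{A})$, namely $\pm d_jd_k\cos\theta$, coupled with $\sigma_{i,0}^2$. So the area is not literally the only geometric quantity. We handle this by expanding the trace explicitly, $\tfrac14[(\sigma_{i,0}^2+\sigma_{i,j}^2)d_k^2+(\sigma_{i,0}^2+\sigma_{i,k}^2)d_j^2-2\sigma_{i,0}^2d_jd_k\cos\theta]$ divided by $d_j^2d_k^2\sin^2\theta$. We then argue that the area factor dominates and report the optimizer of the sandwich bounds, or the exact optimizer, as the area-maximizing (orthogonal) configuration. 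We state this honestly as optimizing an upper bound, which is consistent with the suboptimality the lemma implicitly allows.
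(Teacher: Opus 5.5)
Your route differs from the paper's, and its main line is sound.

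\textbf{How the two routes compare.} The paper puts $\mathcal{A}_0$ at the origin and writes closed-form per-coordinate variances over the common denominator $(x_jy_k-x_ky_j)^2$. It \emph{assumes} the errors in $\widehat{\alpha}_i$ and $\widehat{\beta}_i$ are independent, so the covariance is diagonal. It then argues qualitatively: the denominator is proportional to the squared area, so enlarging the area lowers both the trace and the larger diagonal entry. You instead keep the full covariance $\mathbf{A}^{-1}\mathbf{\Sigma}_B\mathbf{A}^{-T}$, including the correlation from the shared error in $\widehat{\rho}_{i,0}^2$. That correlation makes $\widehat{\alpha}_i$ and $\widehat{\beta}_i$ correlated in general, so the paper's diagonal assumption is not really available. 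You isolate the geometry through $|\det\mathbf{A}|=2\mathcal{S}$, $\|\operatorname{adj}\mathbf{A}\|_F^2=d_j^2+d_k^2$ and $\sigma_{\min}(\mathbf{A})\sigma_{\max}(\mathbf{A})=2\mathcal{S}$. This turns the ``area in the denominator'' heuristic into genuine two-sided bounds.

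Two further points on your bounds. Since $(d_j^2+d_k^2)/(4\mathcal{S}^2)=(d_j^{-2}+d_k^{-2})/\sin^2\theta$ is decreasing in $d_j$, $d_k$ and $|\sin\theta|$, your bounds also cover the paper's ``maximize the area within a bounded region'' reading, not only fixed baselines. The lower bound you did not write for $\kappa_{\max}$ follows from $\kappa_{\max}\ge\tfrac12\operatorname{tr}$.

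What the paper's route buys is explicit per-coordinate formulas that feed Corollary~\ref{coro1} directly. Note, however, that your explicit trace is what \eqref{eq:matrix_form} actually yields, and it does not coincide with \eqref{eq:trace_cov}. With $\mathcal{A}_0$ at the origin, yours reads $\bigl(\sigma_{i,0}^2[(x_k-x_j)^2+(y_k-y_j)^2]+\sigma_{i,j}^2(x_k^2+y_k^2)+\sigma_{i,k}^2(x_j^2+y_j^2)\bigr)/\bigl(4(x_jy_k-x_ky_j)^2\bigr)$.

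\textbf{The step that fails.} Your direct check at the optimum is wrong. With equal variances and $d_j=d_k$, your own formula gives $\operatorname{tr}(\operatorname{cov})\propto(4-2\cos\theta)/\sin^2\theta$. Its minimum $2+\sqrt{3}$ is attained at $\cos\theta=2-\sqrt{3}$ (about $75^\circ$), strictly below the value $4$ at $\theta=\pi/2$.

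In general, set $P=(\sigma_{i,0}^2+\sigma_{i,j}^2)d_k^2+(\sigma_{i,0}^2+\sigma_{i,k}^2)d_j^2$ and $Q=2\sigma_{i,0}^2d_jd_k$, so $Q<P$. The exact fixed-baseline minimizer is $\cos\theta^*=Q/(P+\sqrt{P^2-Q^2})$, which is strictly positive whenever $\sigma_{i,0}^2>0$. The common error in $\widehat{\rho}_{i,0}^2$ always pulls the optimum to an acute angle. So ``the area factor dominates'' cannot be made exact, and the orthogonal placement never exactly minimizes the trace.

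Replace that claim with what is true. The right angle minimizes your sandwich bounds. Its exact trace exceeds the fixed-baseline minimum by the factor $2/(1+\sqrt{1-Q^2/P^2})$, which is at most $8-4\sqrt{3}\approx 1.07$ for equal variances. That quantified near-optimality is the defensible content of the lemma and of the word ``suboptimal''. It is also all that the paper's denominator argument can support.
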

	\begin{proof}
		Without loss of generality, for simplicity, we assume that the AN $\mathcal{A}_0$ is placed at the origin, i.e., its coordinate is $(0,0)$ and the coordinates of $\mathcal{R}_j$ and $\mathcal{R}_k$ are $(x_j,y_j)$ and $(x_k,y_k)$.  Using \eqref{eq:euclidean_dist}, it can be shown that for given $\{\sigma_{i,q}^2 \mid q \in \{0, j, k\}\}$, the variances $\sigma_{\alpha_i}^2$ and $\sigma_{\beta_i}^2$ in the estimation of the x and y coordinates of the location of the $i$th target using the geometrical triangulation in Fig.~\ref{fig:dist_vel_est} is given by
		\begin{align}\nonumber
			\sigma_{\gamma_i}^2 = \frac{\sigma_{i,j}^2(g_j+g_k)^2 + \sigma_{i,0}^2g_k^2 + \sigma_{i,k}^2g_j^2}{\mathcal{D}},
		\end{align}
		where $\mathcal{D} = (x_j y_k - x_k y_j)^2$, the variable $g$ is chosen as $g = y$ when computing $\sigma_{\alpha_i}^2$ (x-coordinate variance), and $g = x$ when computing $\sigma_{\beta_i}^2$ (y-coordinate variance). We assume that the errors in the estimation of $\widehat{\alpha}_i$ and $\widehat{\beta}_i$ are independent of each other. Therefore, the covariance matrix $\operatorname{cov}(\widehat{\alpha}_{i},\widehat{\beta}_{i})$ is diagonal matrix. Thus, $\operatorname{tr}(\operatorname{cov}(\widehat{\alpha}_{i},\widehat{\beta}_{i}))$ and $\kappa_{\max}(\operatorname{cov}(\widehat{\alpha}_{i},\widehat{\beta}_{i}))$ are:
		\begin{subequations}
			\begin{align}\nonumber
				&\operatorname{tr}(\operatorname{cov}(\widehat{\alpha}_{i},\widehat{\beta}_{i})) = \sigma_{\alpha_i}^2 + \sigma_{\beta_i}^2 = \sigma_{i,j}^2[(x_j+x_k)^2 + (y_j+y_k)^2] \\\label{eq:trace_cov}
				& + \sigma_{i,0}^2[x_k^2 + y_k^2] + \sigma_{i,k}^2[x_j^2 + y_j^2]/(x_j y_k - x_k y_j)^2, \\\nonumber
				& \kappa_{\max}(\operatorname{cov}(\widehat{\alpha}_{i},\widehat{\beta}_{i})) = \frac{1}{(x_j y_k - x_k y_j)^2}\max\{\sigma_{i,j}^2(x_j+x_k)^2 \\\label{eq:eigen_cov}
				& + \sigma_{i,0}^2x_k^2 + \sigma_{i,k}^2x_j^2,\;\; \sigma_{i,j}^2(y_j\hspace{-0.5mm}+\hspace{-0.5mm}y_k)^2 \hspace{-0.5mm} + \hspace{-0.5mm} \sigma_{i,0}^2y_k^2 \hspace{-0.5mm} + \hspace{-0.5mm} \sigma_{i,k}^2y_j^2\}.
			\end{align}
		\end{subequations}
		Here, the denominator of \eqref{eq:trace_cov} and \eqref{eq:eigen_cov} is proportional to the triangle's area formed by the three nodes, so maximizing this area within a bounded region is desired. The numerators depend on the x and y coordinates of the receivers, with smaller values decreasing the area. Therefore, optimizing the triangle's area minimizes the trace and maximum eigenvalue.
	\end{proof}
	\noindent One of the sub-optimal solution for the \textbf{Lemma}~\ref{lemma1} can be obtained using the following corollary.
	\begin{corollary}\label{coro1}
		For the given variances, $\{\sigma_{i,q}^2\}$; $q\in \{0,j,k\}$, if the receivers are placed at orthogonal axis in \textbf{Lemma}~\ref{lemma1}, then, $\operatorname{tr}(\operatorname{cov}(\widehat{\alpha}_{i},\widehat{\beta}_{i}))$ or $\kappa_{\max}(\operatorname{cov}(\widehat{\alpha}_{i},\widehat{\beta}_{i}))$ can be minimized by maximizing the area of the triangle formed using the three nodes in a given bounded region. 
	\end{corollary}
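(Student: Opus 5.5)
We specialise the expressions in the proof of \textbf{Lemma}~\ref{lemma1} to orthogonal placement. Keep $\mathcal{A}_0$ at the origin and put the two receivers on the orthogonal axes: $\mathcal{R}_j$ at $(x_j,0)$ and $\mathcal{R}_k$ at $(0,y_k)$. Substituting $y_j=0$ and $x_k=0$ into the variance expression gives
\begin{align}\nonumber
\sigma_{\alpha_i}^2=\frac{\sigma_{i,j}^2y_k^2+\sigma_{i,0}^2y_k^2}{x_j^2y_k^2}=\frac{\sigma_{i,j}^2+\sigma_{i,0}^2}{x_j^2},\qquad
\sigma_{\beta_i}^2=\frac{\sigma_{i,j}^2x_j^2+\sigma_{i,k}^2x_j^2}{x_j^2y_k^2}=\frac{\sigma_{i,j}^2+\sigma_{i,k}^2}{y_k^2}.
\end{align}
The key feature is that the two variances decouple: each depends on only one receiver coordinate. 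The denominator becomes $\mathcal{D}=x_j^2y_k^2=(2\,\mathrm{Area})^2$.

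From these, the trace in \eqref{eq:trace_cov} becomes $(\sigma_{i,j}^2+\sigma_{i,0}^2)/x_j^2+(\sigma_{i,j}^2+\sigma_{i,k}^2)/y_k^2$. The maximum eigenvalue in \eqref{eq:eigen_cov} becomes $\max\{(\sigma_{i,j}^2+\sigma_{i,0}^2)/x_j^2,\,(\sigma_{i,j}^2+\sigma_{i,k}^2)/y_k^2\}$. Both are strictly decreasing in $|x_j|$ and in $|y_k|$ separately, since the variances are given constants.

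Now take the bounded region, say $|x_j|\le X$ and $|y_k|\le Y$, or a disc or box containing the receivers. Both objectives are then minimised by pushing $|x_j|$ and $|y_k|$ to their largest admissible values. The area is $\tfrac12|x_j||y_k|$, which is also increasing in each factor, so on the box both objectives and the area are maximised at the same point $(|x_j|,|y_k|)=(X,Y)$. The area-maximising orthogonal configuration therefore minimises the trace and the maximum eigenvalue.

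For the record, the general form \eqref{eq:trace_cov} shows why orthogonality is a natural sub-optimal choice. Its numerator contains the cross term through $(x_j+x_k)^2+(y_j+y_k)^2$. With orthogonal axes this term reduces to $x_j^2+y_k^2$, while $|x_jy_k-x_ky_j|=\|\mathbf{r}_j\|\|\mathbf{r}_k\||\sin\theta|$ attains its maximum $\|\mathbf{r}_j\|\|\mathbf{r}_k\|$ at $\theta=\pi/2$ for fixed receiver distances.

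The main obstacle is the notion of ``bounded region''. If the region couples $x_j$ and $y_k$, for instance $x_j^2+y_k^2\le R^2$, then maximising the area $\tfrac12|x_jy_k|$ gives $|x_j|=|y_k|=R/\sqrt2$. This need not coincide exactly with the trace minimiser unless $\sigma_{i,0}^2=\sigma_{i,k}^2$. I would therefore state the result for a rectangular, axis-aligned region, where monotonicity gives exact agreement. In the coupled case, the claim holds in the sense of the corollary's ``sub-optimal'' wording: for fixed ratio $|x_j|/|y_k|$, scaling up the area strictly decreases both objectives, since both scale as $1/\mathrm{Area}$ along such rays.
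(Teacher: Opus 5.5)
Your argument is the same as the paper's. You substitute $(x_j,0)$ and $(0,y_k)$ into \eqref{eq:trace_cov} and \eqref{eq:eigen_cov} to get the decoupled terms $(\sigma_{i,j}^2+\sigma_{i,0}^2)/x_j^2$ and $(\sigma_{i,j}^2+\sigma_{i,k}^2)/y_k^2$, so both objectives shrink as $|x_j|$ and $|y_k|$ grow, and hence as the right-triangle area $\tfrac12|x_jy_k|$ grows (strictly for the trace; the maximum is only non-increasing in each coordinate separately, which is enough). Your extra remark is correct and sharpens a point the paper's proof glosses over: the identification with area maximisation is exact only for an axis-aligned rectangular region, and for a disc it holds only along fixed-ratio rays unless $\sigma_{i,0}^2=\sigma_{i,k}^2$.
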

	\begin{proof}
		If the receivers $\mathcal{R}_j$ and $\mathcal{R}_k$ are placed at the x and y axes (orthogonal axes) of the Cartesian coordinates in \textbf{Lemma}~\ref{lemma1}, then their coordinates are given by: $(x_j,0)$ and $(0,y_k)$, respectively. After, substituting these coordinates in \eqref{eq:trace_cov} and \eqref{eq:eigen_cov}, we obtain
		\begin{subequations}\label{eq:trace_eigen_three}
			\begin{align}\label{eq:trace_sp}
				&\!\!\!\!\operatorname{tr}(\operatorname{cov}(\widehat{\alpha}_{i},\widehat{\beta}_{i})) = \frac{\sigma_{i,j}^2 + \sigma_{i,k}^2}{y_k^2} + \frac{\sigma_{i,j}^2 + \sigma_{i,0}^2}{x_j^2},\!\!\!\! \\\label{eq:eigen_sp}
				&\!\!\!\!\kappa_{\max}(\operatorname{cov}(\widehat{\alpha}_{i},\widehat{\beta}_{i})) = \max\left\{\frac{\sigma_{i,j}^2 + \sigma_{i,0}^2}{x_j^2}, \frac{\sigma_{i,j}^2 + \sigma_{i,k}^2}{y_k^2}\right\}.\!\!\!\!
			\end{align}
		\end{subequations}
		From \eqref{eq:trace_sp} and \eqref{eq:eigen_sp}, $\operatorname{tr}(\operatorname{cov}(\widehat{\alpha}_{i},\widehat{\beta}_{i}))$ and $\kappa_{\max}(\operatorname{cov}(\widehat{\alpha}_{i},\widehat{\beta}_{i}))$ can be minimized by maximizing the magnitude of $x_j$ and $y_k$ which is equivalent to maximizing the magnitude of the area of right angled triangle formed using the three nodes.
	\end{proof}
    \noindent Further, in the extended journal version of this work \cite{rani20}, \textbf{Corollary}~\ref{coro1} states that placing multiple receivers along orthogonal axes yields a suboptimal configuration in which the area formed by each triangulation is maximum. This is equivalent to a setup with three nodes: one AN and two receivers, each equipped with multiple independent antennas.
    
    \vspace{-2mm}
	\section{Multistatic Sensing and ISAC}
	This section presents trajectory estimation of moving targets using correlated random walk (CRW) based KF, reflecting the tendency of humans, vehicles, and UAVs to maintain their moving directions for a period instead of moving randomly. Next, we describe the multistatic ISAC.
	
	\subsubsection*{CRW-Based Prediction Model} The CRW leverages the Ornstein-Uhlenbeck process to model target motion \cite{joh02}. The velocity $\mathbf{v}_{i,t} = [v_{i,\alpha,t}, v_{i,\beta,t}]^T$ of the $i$th target at time $t$ is:
	\begin{align}
		\mathrm{d}\mathbf{v}_{i,t} = \delta_i(\mathbf{v}_{i,t} - \bm{\omega}_{i})\mathrm{d}t + \psi_i \mathrm{d}\mathbf{W}_{i,t},
	\end{align}
	where $\delta_i$ is the velocity autocorrelation, $\bm{\omega}_i$ the mean velocity, $\psi_i$ the diffusion coefficient, and $\mathbf{W}_{i,t}$ a Wiener process. Its solution in continuous and discrete time is:
	\begin{align}
		&\mathbf{v}_{i,t} = \bm{\omega}_i + (\mathbf{v}_{i,0} - \bm{\omega}_i)e^{-\delta_i t} + \psi_i \int_{0}^t e^{-\delta_i (t-s)}\mathrm{d}\mathbf{W}_{i,s},\\
		&\mathbf{v}_{i,t+1} = e^{-\delta_i \Delta t}\mathbf{v}_{i,t} + \mathbf{n}_{i,t}^{\mathbf{v}},
	\end{align}
	with $\mathbf{n}_{i,t}^{\mathbf{v}}$ representing Gaussian velocity noise. The target position $\mathbf{l}_{i,t} = [\alpha_{i,t}, \beta_{i,t}]^T$ is updated by
	\begin{align}
		&\mathbf{l}_{i,t+1} = \mathbf{l}_{i,t} + \frac{1 - e^{-\delta_i \Delta t}}{\delta_i}\mathbf{v}_{i,t} + \mathbf{n}_{i,t}^{\mathbf{l}},
	\end{align}
	where $\mathbf{n}_{i,t}^{\mathbf{l}}$ is Gaussian location noise.
	
	Using the CRW model in a KF, the state of the $i$th target is $\mathbf{S}_{i,t} = [\alpha_{i,t}, \beta_{i,t}, v_{i,\alpha,t}, v_{i,\beta,t}]^T$, with discrete-time transition:
	\begin{subequations}
		\begin{align}\label{eq:state_tran}
			&\mathbf{s}_{i,t+1} = \mathbf{T}_i \mathbf{s}_{i,t} + \mathbf{n}_{i,t}^{\mathbf{S}},\\
			&\mathbf{T}_i =
			\begin{bmatrix}
				1 & 0 & \frac{1 - e^{-\delta_i \Delta t}}{\delta_i} & 0 \\
				0 & 1 & 0 & \frac{1 - e^{-\delta_i \Delta t}}{\delta_i} \\
				0 & 0 & e^{-\delta_i \Delta t} & 0 \\
				0 & 0 & 0 & e^{-\delta_i \Delta t} 
			\end{bmatrix},
		\end{align}
	\end{subequations}
	where $\mathbf{n}_{i,t}^{\mathbf{S}}$ is Gaussian noise with covariance $\mathbf{Q}_{i,t} = \mathbb{E}[\mathbf{n}_{i,t}^{\mathbf{S}}{\mathbf{n}_{i,t}^{\mathbf{S}}}^T]$.
	\vspace{-0.5mm}
	\subsubsection*{Measurement Model and Trajectory Estimation Using KF}
	As discussed in Section~\ref{sec:loc_n_sen}, the measurement $\mathbf{z}_{i,t}$ for the $i$th target’s position and velocity is obtained using the selection approach in \eqref{eq:nearest_sel}. The measurement model is given by:  
	\begin{align}\label{eq:meas_mod}
		\mathbf{z}_{i,t} = \mathbf{V}_i \mathbf{s}_{i,t} + \mathbf{n}_{i,t}^\mathbf{z},
	\end{align}
	where $\mathbf{V}_i$ is the observation matrix and $\mathbf{n}_{i,t}^\mathbf{z}$ denotes zero-mean Gaussian noise with covariance $\mathbf{R}_i = \mathbb{E}[\mathbf{n}_{i,t}^\mathbf{z}{\mathbf{n}_{i,t}^\mathbf{z}}^T]$. Since both $x$- and $y$-components of location and velocity are considered, $\mathbf{V}_i = \mathbf{I}_4$. Using \eqref{eq:state_tran} and \eqref{eq:meas_mod}, the KF performs prediction and correction to enhance state estimation accuracy. In \textbf{Algorithm}~\ref{algo11}, the initialization includes $\mathbf{T}_i$, $\mathbf{V}_i$, process covariance $\mathbf{Q}_{i,0}$, measurement $\mathbf{z}_{i,1}$ with noise covariance $\mathbf{R}_{i,1}$, and the initial state estimate $\mathbf{s}^a_{i,0} = \mathbb{E}[\mathbf{s}_{i,0}]$ with error covariance $\mathbf{P}_{i,0}$. At each time $t$, the algorithm as a function $f_{KF}$ (as described in Algorithm \ref{algo11}) uses $\mathbf{Q}_{i,t-1}$, $\mathbf{R}_{i,t}$, $\mathbf{s}^a_{i,t-1}$, $\mathbf{P}_{i,t-1}$, and $\mathbf{z}_{i,t}$ to compute $\mathbf{s}^f_{i,t}$, $\mathbf{P}^f_{i,t}$, $\mathbf{K}_{i,t}$, $\mathbf{s}^a_{i,t}$, and $\mathbf{P}_{i,t}$, yielding the updated $\mathbf{s}^a_{i,t}$ and $\mathbf{P}_{i,t}$.
	
	\begin{algorithm}[!t]
		\caption{KF function $f_{KF}$ to estimate the state $\mathbf{s}_{i,t}^a$.}\label{algo11}
		\begin{algorithmic}[1]
			\State System parameters $\mathbf{T}_i$ and $\mathbf{V}_i$
			\State \textbf{Input:} $\mathbf{Q}_{i,t-1}$, $\mathbf{R}_{i,t}$, $\mathbf{s}^a_{i,t-1}$, $\mathbf{P}_{i,t-1}$, and the measurement $\mathbf{z}_{i,t}$
			
			\State Determine the forecasted state: $\mathbf{s}^f_{i,t} = T_i \mathbf{s}^a_{i,t-1}$\label{step:3}
			\State Update the forecast error covariance: $\mathbf{P}^f_{i,t} = \mathbf{T}_i \mathbf{P}_{i,t-1} \mathbf{T}_i^T + \mathbf{Q}_{i,t-1}$\label{step:4}
			
			\State Determine Kalman gain: $\mathbf{K}_{i,t} = \mathbf{P}^f_{i,t} \mathbf{V}_i^T (\mathbf{V}_i \mathbf{P}^f_{i,t} \mathbf{V}_i^T + \mathbf{R}_{i,t})^{-1}$\label{step:5}
			\State Update state estimate: $\mathbf{s}^a_{i,t} = \mathbf{s}^f_{i,t} + \mathbf{K}_{i,t} (\mathbf{z}_{i,t} - \mathbf{V}_i \mathbf{s}^f_{i,t})$\label{step:6}
			\State Update posterior error covariance: $\mathbf{P}_{i,t} = (\mathbf{I} - \mathbf{K}_{i,t} \mathbf{V}_i) \mathbf{P}^f_{i,t} (\mathbf{I} - \mathbf{K}_{i,t} \mathbf{V}_i)^T + \mathbf{K}_{i,t} \mathbf{R}_{i,t} \mathbf{K}_{i,t}^T$\label{step:7}
			
			\State \textbf{Output:} State estimate $\mathbf{s}^a_{i,t}$ and the error covariance $\mathbf{P}_{i,t}$
		\end{algorithmic}
	\end{algorithm}
	
	\subsubsection*{Complexity of Algorithm \ref{algo11}}
    For the length $m$ and $n$ of the respective vectors $\mathbf{s}_t$ and $\mathbf{z}_t$, a single Kalman filter update in \textbf{Algorithm}~\ref{algo11} has complexity $\mathcal{O}(n^3 + n^2 m + n m^2)$, dominated by covariance and gain computations, and reduces to $\mathcal{O}(n^3)$ when $m \le n$.

	\begin{algorithm}[!t]
		\caption{Improved active sensing using KF.}\label{algo2}
		\begin{algorithmic}[1]
			\State \textbf{System parameters:} $\{\mathbf{T}_i\}$, $\{\mathbf{V}_i\}$ for $i \in \{1, \cdots, P\}$, and others independent parameters
			\State \textbf{Input:} $\{\mathbf{s}_{i,0}^a\}$, $\{\mathbf{P}_{i,0}\}$
			\For{each $t \in \{1,\cdots, L\}$}
			\For{each $i \in \{1,\cdots, P\}$}
			\For{each $j \in \{0,\cdots, Z\}$}
			\State Determine $(\widehat{\tau}_{i,j,t},\widehat{\nu}_{i,j,t})$ and $\widehat{h}_{i,j,t}$ from \eqref{eq:tau_n_nu} and \eqref{eq:gain_est} using the known data $\mathbf{d}$\label{step:62}
			\State Determine the range $\widehat{\rho}_{i,j,t}$ and the radial velocity $\widehat{v}_{i,j,t}$ from \eqref{eq:est_tar_para}\label{step:72}
			\EndFor
			\State Generate $\mathbf{Q}_{i,t-1}$ and determine the selection based estimated observation vector $\mathbf{z}_{i,t} = [\widehat{\alpha}_{i,t},\widehat{\beta}_{i,t},\widehat{v}_{i,x,t},\widehat{v}_{i,y,t}]^T$ using \eqref{eq:nearest_sel} and calculate the noise covariance $\mathbf{R}_{i,t}$\label{step:92}
			\State Determine $\mathbf{s}_{i,t}^a$ and $\mathbf{P}_{i,t}$ using the function $f_{KF}(\mathbf{s}_{i,t-1}^a, \mathbf{P}_{i,t-1}$, $\mathbf{Q}(i,t-1)$, $\mathbf{R}_{i,t})$ in \textbf{Algorithm}~\ref{algo11}\label{step:102}
			\EndFor
			\EndFor
		\end{algorithmic}
	\end{algorithm}
	In \textbf{Algorithm}~\ref{algo2}, the active sensing as described in Section~\ref{sec:sys_mod} is further improved using the KF. Here, for each target $i$ and receiver $j$ at a given time $t$, the delay $\widehat{\tau}_{i,j}$, Doppler $\widehat{\nu}_{i,j}$, and the channel gain $\widehat{h}_{i,j}$ are estimated, followed by the range $\widehat{\rho}_{i,j}$ and the radial velocity $\widehat{v}_{i,j}$. Then, for each target $i$, the measurement vector $\mathbf{z}_{i,t}$ is determined; thereafter, using $f_{KF}$ in \textbf{Algorithm}~\ref{algo11}, the sensing variables are estimated.
	
	\subsubsection*{Complexity of Algorithm \ref{algo2}} 
	\textbf{Algorithm}~\ref{algo2} has complexity $\mathcal{O}\!\big(T P ((Z+1) G_\tau G_\nu MN + n^3)\big)$, 
dominated by 2D delay-Doppler correlation and Kalman updates, scaling linearly with $T$, $P$, and $Z$, quadratically with the delay--Doppler grid, and cubically with the state dimension $n$.

    \subsubsection*{Multistatic Integrated Sensing and Communication}
    The multistatic ISAC operates through an iterative refinement between coarse estimation, data detection, and channel update. Initially, coarse estimates of the delays, Doppler shifts, and channel gains are obtained from the pilot-based minimization in \eqref{eq:est_pilot_min} and transformed into the range and velocity domains using \eqref{eq:est_tar_para}. These parameters are then used to determine the target location and velocity through the selection approach in \eqref{eq:nearest_sel}, providing the initial conditions for iterative refinement. In each iteration, the received data are refined using the gradient-based update defined by \eqref{eq:gradient}, followed by symbol-wise demodulation as in \eqref{eq:symb_demod} to recover the transmitted information symbols. The reconstructed data are then used to re-estimate the channel parameters, and the process continues until the variations in channel gains, delays, and Doppler shifts fall below the predefined convergence thresholds\footnote{A detailed performance analysis of the KF-assisted multistatic ISAC within the considered geometric framework is presented in the extended Journal version of this work in \cite{rani20}.}.

	
	\section{Numerical Results and Conclusion}\label{sec:num_rel}
	This section evaluates the proposed analysis via numerical results. As per \textbf{Lemma}~\ref{lemma1}, a three-node topology with multiple antennas forming the largest triangular coverage outperforms randomly placed single-antenna nodes and is adopted for ISAC evaluation unless stated otherwise. System parameters are: $M=256$ subcarriers, $N=16$ symbols (4-QAM), $\Delta f=240\,\text{kHz}$, cyclic prefix 64, $f_c=30\,\text{GHz}$, SNR $0$\,dB; mobility: $\delta_i=1.5$, $\psi_i=0.5$; with $N_B=3$ nodes, $N_T=4$ targets, and trajectories updated every $\Delta t=0.5\,\text{s}$ in a $400\times400\,\text{m}^2$ area.

	\begin{figure}[!t]
		\centering\vspace{-3mm}\!\!\!\!
		\subfigure[]{\includegraphics[width=0.25\textwidth]{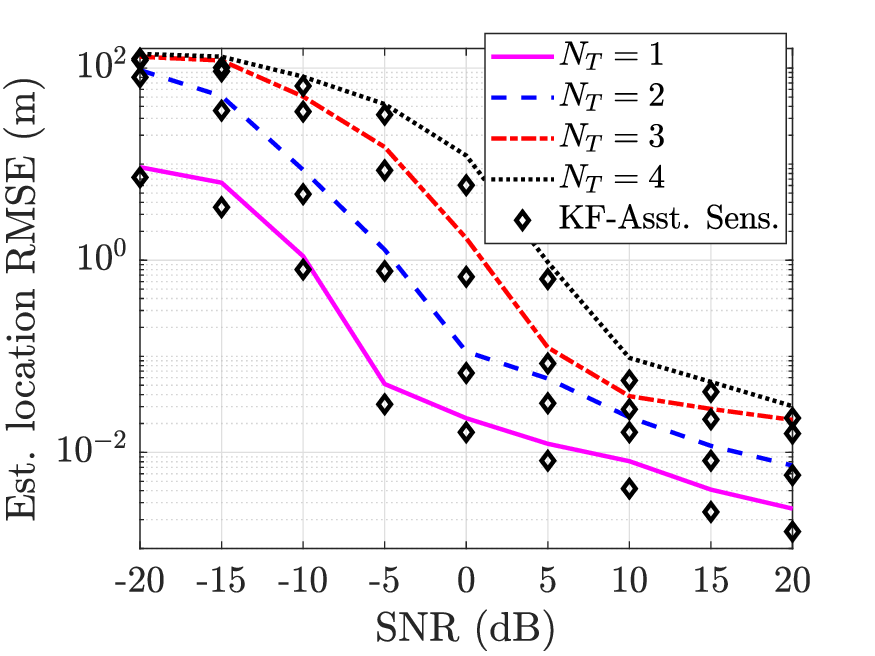}}\hspace{-0mm} \!\!\!\!
		\subfigure[]{\includegraphics[width=0.25\textwidth]{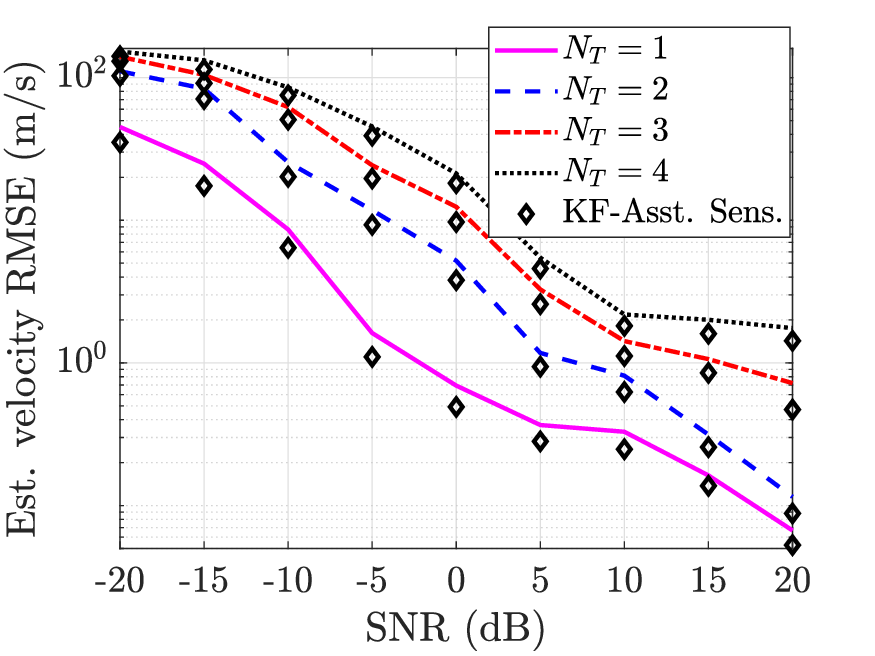}}\!\!\!\!
		\caption{RMSE improvement with SNR for different numbers of targets.}
		\label{fig:fig1}\vspace{-0mm}
	\end{figure} 

    \begin{figure}[!t]
		\centering\vspace{-3mm}\!\!\!\!
		\subfigure[]{\includegraphics[width=0.25\textwidth]{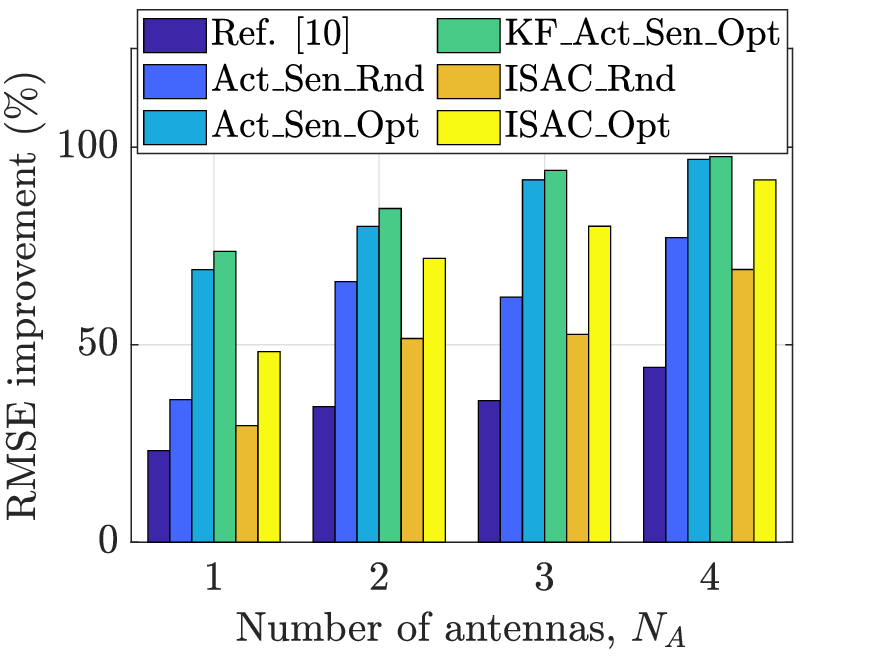}}\hspace{-0mm} \!\!\!\!
		\subfigure[]{\includegraphics[width=0.25\textwidth]{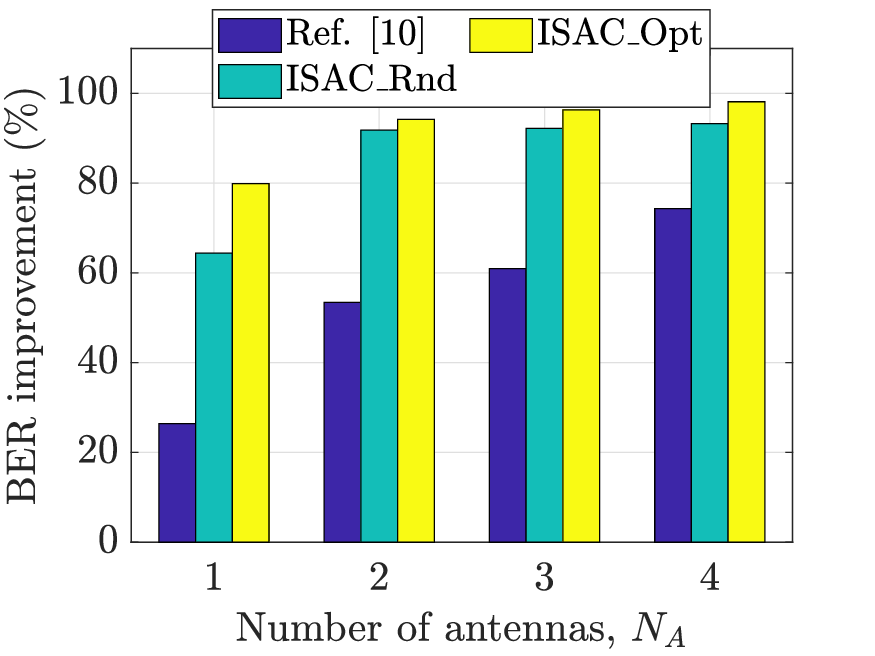}}\!\!\!\!
		\caption{Performance comparison of different schemes.}
		\label{fig:Scheme_Comp}\vspace{-5mm}
	\end{figure}

Using Fig.~\ref{fig:fig1}, the RMSE performance of active sensing and KF-assisted active sensing is compared for target localization and velocity estimation. The KF-assisted method consistently outperforms standard active sensing for all target counts by effectively balancing prediction and measurement noise, thus preventing overfitting. In Fig.~\ref{fig:fig1}(a), at $-20$~dB SNR, the RMSE rises sharply for multiple targets, increasing by approximately $85$~m for $N_T > 1$ due to the difficulty of distinguishing multiple peaks in noisy conditions. On average, RMSE increases by $15.53$~m, $18.00$~m, and $10.03$~m when $N_T$ changes from $1$ to $2$, $2$ to $3$, and $3$ to $4$, respectively, while KF-assisted sensing reduces RMSE by an average of $4.96$~m. Similarly, in Fig.~\ref{fig:fig1}(b), the velocity RMSE grows by $17.50$~m/s, $12.29$~m/s, and $10.87$~m/s for the same increments in $N_T$, whereas the KF-assisted method achieves an average improvement of $4.02$~m/s.

Fig.~\ref{fig:Scheme_Comp} compares multistatic OTFS--ISAC schemes, where sensing/communication gains are measured against their monostatic baselines under identical parameters. Act\_Sen\_Rnd and Act\_Sen\_Opt perform active sensing, while ISAC\_Rnd and ISAC\_Opt enable ISAC with random and optimized receiver deployment. KF\_Act\_Sen\_Opt incorporates KF-assisted tracking with optimized geometry. In Fig.~\ref{fig:Scheme_Comp}(a), optimized schemes markedly lower RMSE, with KF\_Act\_Sen\_Opt achieving the best accuracy via spatial-temporal information. In Fig.~\ref{fig:Scheme_Comp}(b), ISAC\_Rnd and ISAC\_Opt achieve $85.4\%$ and $92.1\%$ BER gains, respectively, outperforming the $53.8\%$ improvement in \cite{sruti18}, confirming that OTFS and geometry-aware deployment significantly enhance sensing and communication.

\section*{Acknowledgment}
In this paper, the work of Z. Liu was supported in part by the UK Engineering and Physical Sciences Research Council under Grants EP/X035352/1 (\textit{DRIVE}), EP/Y000986/1 (\textit{SORT}), and EP/Y037243/1 (\textit{REVOL6G}), and by the Royal Society under Grants IEC\textbackslash NSFC\textbackslash 233292 and IES\textbackslash R1\textbackslash 241212. Additionally, this work was also supported by DST-funded INAE Indo-Taiwan project 2023/INTW/10.

\vspace{-1mm}

	\vspace{-1mm}
	
	\makeatletter
	\renewenvironment{thebibliography}[1]{%
		\@xp\section\@xp*\@xp{\refname}%
		\normalfont\footnotesize\labelsep .5em\relax
		\renewcommand\theenumiv{\arabic{enumiv}}\let\p@enumiv\@empty
		\vspace*{-1pt}
		\list{\@biblabel{\theenumiv}}{\settowidth\labelwidth{\@biblabel{#1}}%
			\leftmargin\labelwidth \advance\leftmargin\labelsep
			\usecounter{enumiv}}%
		\sloppy \clubpenalty\@M \widowpenalty\clubpenalty
		\sfcode`\.=\@m
	}{%
		\def\@noitemerr{\@latex@warning{Empty `thebibliography' environment}}%
		\endlist
	}
	\makeatother
	\bibliographystyle{IEEEtran}
	\bibliography{references_COMML}
\end{document}